\begin{document}
	
	\title{On the Polarization of R\'{e}nyi Entropy}
	\author{\IEEEauthorblockN{Mengfan~Zheng}
			\IEEEauthorblockA{Imperial College London, \\United Kingdom\\Email: m.zheng@imperial.ac.uk}
		\and
		\IEEEauthorblockN{Ling~Liu}
		\IEEEauthorblockA{Huawei Technologies Co. Ltd.\\Shenzhen, P. R. China\\Email: liuling\_88@pku.edu.cn}
		\and
		\IEEEauthorblockN{Cong~Ling}
		\IEEEauthorblockA{Imperial College London, \\United Kingdom\\Email: c.ling@imperial.ac.uk}
	}

	\maketitle
	
	\begin{abstract}
		Existing polarization theories have mostly been concerned with Shannon's information measures, such as Shannon entropy and mutual information, and some related measures such as the Bhattacharyya parameter. In this work, we extend polarization theories to a more general information measure, namely, the R\'{e}nyi entropy. Our study shows that under conditional R\'{e}nyi entropies of different orders, the same synthetic sub-channel may exhibit opposite extremal states. This result reveals more insights into the polarization phenomenon on the micro scale (probability pairs) rather than on the average scale.
		
	\end{abstract}
	
	
	\section{Introduction}
	\label{Sec:Intro}
	The polarization technique (including channel polarization \cite{arikan2009channel} and source polarization \cite{arikan2010source}) is one of the most significant breakthrough in information theory over the past decade. Ar{\i}kan showed us that as the size of the polar transformation goes to infinity, the conditional entropies of the synthetic sub-channels (or random variable pairs) equal 0 or 1 almost everywhere (a.e.) \cite{arikan2009channel}. Also, their varentropies (variance of the conditional entropy random variable) asymptotically decrease to zero \cite{arikan2016varentropy}. These results imply that the sub-channels' transition probability matrices tend to either deterministic (noiseless channels) or uniform with respect to any channel input (completely noisy channels). However, are they still close to uniform or deterministic distributions under stricter criteria? Polarization results using Shannon's information measures fail to answer this question.	
	
	In this work, we study polarization using a more general information measure, i.e., the R\'{e}nyi entropy \cite{renyi1961measures}. The R\'{e}nyi entropy is more sensitive to deviations from uniform or deterministic distributions, and has been used in many areas where Shannon entropy may not be a good metric. For example, the collision entropy and min-entropy, both of which are special cases of the R\'{e}nyi entropy, are convenient metrics for privacy amplification in secret-key agreement \cite{bloch2011physical}. 
	The R\'{e}nyi entropy of a random variable $X$ is defined as follows.	
	\newtheorem{definition}{Definition}
	\begin{definition}[R\'{e}nyi Entropy \cite{renyi1961measures}]
		\label{def:RenyiEntropy}
		The R\'{e}nyi entropy of a random variable $X\in\mathcal{X}$ of order $\alpha$ is defined as
		\begin{equation}
		H_{\alpha}(X)=\frac{1}{1-\alpha}\log \sum_{x\in\mathcal{X}}{P_X(x)^{\alpha}},
		\end{equation}
	\end{definition}
    It can be shown that as $\alpha\rightarrow 1$, the R\'{e}nyi entropy reduces to the Shannon entropy. Two other special cases of the R\'{e}nyi entropy which will be discussed later include the \textit{max-entropy}:
    \begin{equation}
    	H_{0}(X)=\log |\mathcal{X}|, \label{max-entropy}
    \end{equation}
    which is the R\'{e}nyi entropy of order 0, and the \textit{min-entropy}:
    \begin{equation}
    H_{\infty}(X)=\min_{i}(-\log p_i)=-\log\max_i p_i,
    \end{equation}
    which equals the limiting value of $H_{\alpha}(X)$ as $\alpha\rightarrow \infty$.
	
	Unlike the conditional Shannon entropy, there is no generally accepted definition of the conditional R\'{e}nyi entropy yet. In this paper, we adopt the following definition of conditional R\'{e}nyi entropy in the study of polarization.
	
	\begin{definition}[conditional R\'{e}nyi entropy \cite{jizba2004world,golshani2009some}]
		\label{def:CondRenyiEntropyJiz}
		The conditional R\'{e}nyi entropy of order $\alpha$ of $X$ given $Y$ is defined as
		\begin{align}
		H_{\alpha}(X|Y)&=\frac{1}{1-\alpha}\log \frac{\sum_{\{x,y\}\in\mathcal{X}\times\mathcal{Y}}{P_{X,Y}(x,y)^{\alpha}} }{\sum_{y\in\mathcal{Y}}P_Y(y)^{\alpha}}. \label{def:CondiRenyi}
		\end{align}
	\end{definition}
    Note that this type of conditional R\'{e}nyi entropy satisfies the chain rule:
    \begin{equation}
    H_{\alpha}(X|Y)+H_{\alpha}(Y)=H_{\alpha}(X,Y).\label{RenyiCondChain}
    \end{equation}
	This means that it reduces to the conditional Shannon entropy when $\alpha=1$. 

	
	

	There have been very limited researches in regard to polarization of conditional R\'{e}nyi entropies. In \cite{Alsan2014E0} it is shown that the following chain rule inequality holds for the polar transformation for $\alpha \leq 1$,
	\begin{align*}
	H^*_{\alpha}(U_1U_2|Y_1Y_2)\geq H^*_{\alpha}(U_1|Y_1Y_2)+H^*_{\alpha}(U_2|Y_1Y_2U_1),
	\end{align*}
	whenever $U_1$, $U_2$ are i.i.d. uniform on $\mathbb{F}_2$. The inequality holds with equality if and only if the channel $W$ is perfect, or the channel $W$ is completely noisy, or $\alpha = 1$. Note that $H^*_{\alpha}(X|Y)$ in \cite{Alsan2014E0} is defined as $H^*_{\alpha}(X|Y)=\frac{\alpha}{1-\alpha}\log \sum_{y\in\mathcal{Y}}P_Y(y) \Big{[}\sum_{x\in\mathcal{X}}{P_{X|Y}(x|y)^{\alpha}\Big{]}^{\frac{1}{\alpha}}}$.
	
	In this paper, we also restrict ourselves to the binary case (i.e., $\mathcal{X}=\mathbb{F}_2$) as a starting point. As a result, logarithms in this paper will all be base-2. However, we do not assume $X$ to be uniformly distributed. 
	We prove that the order-$\alpha$ conditional R\'{e}nyi entropies of the synthetic sub-channels also polarize to 0 and 1, but the fraction of 1 tends to $H_{\alpha}(X|Y)$. This is to say that for a given sub-channel, its conditional R\'{e}nyi entropies of different orders may exhibit opposite extremal states. Intuitively, if a sub-channel is truly noiseless (or truly completely noisy), its conditional R\'{e}nyi entropies of different orders should all be 0 (or 1). We show both analytically and numerically that this strange phenomenon is caused by a vanishing deviation from truly deterministic or truly uniform distributions. The different extremal states that a sub-channel exhibits for various $\alpha$ reflect the polarization level of the sub-channel at the micro scale, i.e., how close is its joint distribution to truly uniform or truly deterministic.

	\section{Polarization of R\'{e}nyi Entropy}
	
	\subsection{Polarization of a Basic Polar Transformation}
	
	Consider two binary-input discrete memoryless channels (B-DMC) $P_{Y_1|X_1}$ and $P_{Y_2|X_2}$ with the same input distribution $P_X$ in the channel coding scenario, or consider $(X_1,Y_1)$ and $(X_2,Y_2)$ as two samples of a memoryless source $(X,Y)\sim P_{X,Y}$ with $X$ being the binary source to be compressed and $Y$ being the side information about $X$ in the source coding scenario. Note that in the former case, $P_{Y_1|X_1}$ and $P_{Y_2|X_2}$ can be different, which corresponds to the compound channel setting. Let
	\begin{align}
	U_1=X_1\oplus X_2,~~U_2=X_2,\label{PolarTransform}
	\end{align}
	and denote 
	\begin{align*}
	&~~~~P_A(u_1,u_2,y_1,y_2)\\
	&\triangleq P_{U_1,U_2}(u_1,u_2)P_{Y_1|X_1}(y_1|u_1\oplus u_2)P_{Y_2|X_2}(y_2|u_2)
	\end{align*}
	for short. From (\ref{PolarTransform}) we know that
	\begin{equation}
	\label{PU1U2}
	\begin{aligned}
	P_{U_1,U_2}(0,0)=P_{X}(0)P_{X}(0)&,~
	P_{U_1,U_2}(0,1)=P_{X}(1)P_{X}(1),\\
	P_{U_1,U_2}(1,0)=P_{X}(1)P_{X}(0)&,~
	P_{U_1,U_2}(1,1)=P_{X}(0)P_{X}(1).
	\end{aligned}
	\end{equation}
	For the basic polar transformation, we have the following polarization result.
	
	\newtheorem{lemma}{Lemma}
	\begin{lemma}
		\label{lemma:RenyiPolar1}
		For $\alpha\geq 0$, we have
		\begin{align}
		H_{\alpha}(U_2|Y_1Y_2U_1) &\leq \min \{H_{\alpha}(X_1|Y_1), H_{\alpha}(X_2|Y_2)\} \label{RenyiPolar-a1}.\\
		H_{\alpha}(U_1|Y_1Y_2) &\geq \max \{H_{\alpha}(X_1|Y_1), H_{\alpha}(X_2|Y_2)\} ,\label{RenyiPolar-a2}\\
		H_{\alpha}(U_1U_2|Y_1Y_2)&=H_{\alpha}(U_1|Y_1Y_2)+H_{\alpha}(U_2|Y_1Y_2U_1).  \label{RenyiPolar-a3}
		\end{align}	
	\end{lemma}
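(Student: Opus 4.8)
The plan is to reduce all three claims to a single scalar inequality about powers, proved by majorization. First I would dispose of the chain-rule identity \eqref{RenyiPolar-a3}: it is not special to this transformation but follows by applying the general chain rule \eqref{RenyiCondChain} three times. Writing $H_\alpha(A|B)=H_\alpha(A,B)-H_\alpha(B)$ and expanding $H_\alpha(U_1U_2|Y_1Y_2)$, $H_\alpha(U_1|Y_1Y_2)$ and $H_\alpha(U_2|Y_1Y_2U_1)$ this way, the joint terms telescope and \eqref{RenyiPolar-a3} drops out with essentially no computation.

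Second, I would record two structural facts. (a) Because the two channel uses are independent, $P_{X_1Y_1X_2Y_2}$ factorizes, and since $t\mapsto t^\alpha$ sends products to products inside the sums, the defining ratio in Definition~\ref{def:CondRenyiEntropyJiz} factorizes; hence $H_\alpha(X_1X_2|Y_1Y_2)=H_\alpha(X_1|Y_1)+H_\alpha(X_2|Y_2)$. (b) Since $(U_1,U_2)$ is a fixed invertible $\mathbb{F}_2$-linear image of $(X_1,X_2)$, the multiset of joint probabilities is merely relabeled and the $Y$-marginals are unchanged, so $H_\alpha(U_1U_2|Y_1Y_2)=H_\alpha(X_1X_2|Y_1Y_2)$. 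Combining, $H_\alpha(U_1U_2|Y_1Y_2)=H_\alpha(X_1|Y_1)+H_\alpha(X_2|Y_2)$. Feeding this into \eqref{RenyiPolar-a3} shows \eqref{RenyiPolar-a1} and \eqref{RenyiPolar-a2} are equivalent: it suffices to prove the single bound $H_\alpha(U_1|Y_1Y_2)\ge\max\{H_\alpha(X_1|Y_1),H_\alpha(X_2|Y_2)\}$, since then the minimum bound follows by subtraction. The $X_1\leftrightarrow X_2$ symmetry of $U_1=X_1\oplus X_2$ further reduces this to the one inequality $H_\alpha(U_1|Y_1Y_2)\ge H_\alpha(X_1|Y_1)$.

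Third, the heart of the argument, I would prove this last inequality by direct expansion. Writing $a_i(y_1)=P_{X_1Y_1}(i,y_1)$ and $b_j(y_2)=P_{X_2Y_2}(j,y_2)$, independence makes the denominators factor and the conditional law of $U_1$ a binary convolution, so that after clearing the common factor $\sum_{y_2}P_{Y_2}(y_2)^\alpha$ and tracking the sign of $1/(1-\alpha)$, the entropy inequality reduces to the termwise claim, for every fixed $(y_1,y_2)$,
\[
(a_0b_0+a_1b_1)^\alpha+(a_0b_1+a_1b_0)^\alpha \;\ge\; (a_0^\alpha+a_1^\alpha)(b_0+b_1)^\alpha \qquad (0\le\alpha<1),
\]
with the inequality reversed for $\alpha>1$. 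Both sides are homogeneous of degree $\alpha$ in $(b_0,b_1)$, so we may normalize $b_0+b_1=1$; then the pair $(a_0b_0+a_1b_1,\,a_0b_1+a_1b_0)$ is the image of $(a_0,a_1)$ under the doubly stochastic matrix $\left(\begin{smallmatrix}b_0 & b_1\\ b_1 & b_0\end{smallmatrix}\right)$, hence is majorized by $(a_0,a_1)$. Since $t\mapsto t^\alpha$ is concave for $\alpha<1$ and convex for $\alpha>1$, the Schur-concavity (resp.\ Schur-convexity) of $\sum_i t_i^\alpha$ yields exactly the two directions above; the sign of $1/(1-\alpha)$ flips in lockstep, so both cases produce the same entropy inequality. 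Summing the termwise bound over $(y_1,y_2)$ completes the proof, with $\alpha\in\{0,1\}$ and $\alpha\to\infty$ handled by continuity.

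I expect the main obstacle to be precisely this termwise inequality. One cannot invoke a generic ``conditioning reduces R\'enyi entropy'' principle, since that fails for the Jizba definition, so the monotonicity has to be extracted by hand. Recognizing the doubly stochastic / majorization structure is what makes it clean and simultaneously covers $\alpha<1$ and $\alpha>1$; the only real bookkeeping risk is keeping the direction of the inequality consistent with the sign of $1/(1-\alpha)$ throughout.
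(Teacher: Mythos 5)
Your proof is correct, and its skeleton matches the paper's---chain rule, one directional inequality proved termwise for each fixed $(y_1,y_2)$, and the remaining bound recovered by subtraction---but the two key steps are carried out with genuinely different tools. The paper proves the ``minus'' inequality (\ref{RenyiPolar-a1}) directly: writing $H_{\alpha}(X_1|Y_1)$ as in (\ref{Proof:1}), it shows $S_2\geq S_4$ termwise using the reverse Minkowski inequality of Proposition~\ref{Prop:Minkowski} (the $0<p<1$ case), and then extracts (\ref{RenyiPolar-a2}) from (\ref{RenyiPolar-a3}); the conservation identity $H_{\alpha}(U_1U_2|Y_1Y_2)=H_{\alpha}(X_1|Y_1)+H_{\alpha}(X_2|Y_2)$ needed for that last deduction is left implicit in (\ref{Proof:1}). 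You instead prove the ``plus'' inequality (\ref{RenyiPolar-a2}) directly, bounding the same quantity $S_2$ (in your notation $a_i=P_{X_1,Y_1}(i,y_1)$, $b_j=P_{X_2,Y_2}(j,y_2)$) below by $\sum_{y_1,y_2}(a_0^{\alpha}+a_1^{\alpha})(b_0+b_1)^{\alpha}$ via majorization: the convolution pair is the image of $(a_0,a_1)$ under a doubly stochastic matrix, and $\sum_i t_i^{\alpha}$ is Schur-concave for $\alpha<1$ and Schur-convex for $\alpha>1$, with the sign of $1/(1-\alpha)$ flipping in lockstep. You also make the conservation identity explicit (relabeling invariance under the invertible map $(X_1,X_2)\mapsto(U_1,U_2)$ plus factorization under independence), which is cleaner than the paper's implicit use. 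Note that the two termwise inequalities are actually the \emph{same} inequality with the roles of $(a_0,a_1)$ and $(b_0,b_1)$ exchanged, since the left-hand side $(a_0b_0+a_1b_1)^{\alpha}+(a_0b_1+a_1b_0)^{\alpha}$ is symmetric under that exchange; so the two routes are exact duals of one another. What yours buys: a uniform treatment of $\alpha<1$ and $\alpha>1$, an honest statement of the conservation law, and only one scalar inequality to verify. What the paper's buys: Proposition~\ref{Prop:Minkowski} comes with explicit equality conditions, which the proof of Theorem~\ref{Theorem:polarRenyi} later exploits to identify the extremal Cases 1 and 2; if your argument replaced the paper's, one would need the corresponding equality analysis for majorization (the doubly stochastic matrix degenerates to a permutation, i.e.\ $b_0b_1=0$, or else $a_0=a_1$), which recovers exactly the same two cases. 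Your handling of $\alpha\in\{0,1\}$ by continuity is sound for finite alphabets, where $H_{\alpha}$ is continuous in $\alpha$ and weak inequalities survive limits; the paper checks these values directly instead.
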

	
	Let us first recall the Minkowski inequality before proving Lemma \ref{lemma:RenyiPolar1}. 
	\newtheorem{proposition}{Proposition}
	\begin{proposition}
		\label{Prop:Minkowski}
		The Minkowski inequality states that for $1\leq p\leq \infty$,
		\begin{align}
		\Bigg{(} \sum_{k=1}^{n}|x_k+y_k|^p\Bigg{)}^{\frac{1}{p}}\leq \Bigg{(} \sum_{k=1}^{n}|x_k|^p\Bigg{)}^{\frac{1}{p}}+\Bigg{(} \sum_{k=1}^{n}|y_k|^p\Bigg{)}^{\frac{1}{p}}, \label{MinkovIneq}
		\end{align}
		with equality if and only if $\mathbf{x}=(x_1,x_2,...,x_n)$ and $\mathbf{y}=(y_1,y_2,...,y_n)$ are positively linearly dependent, i.e., $\mathbf{x}=\lambda \mathbf{y}$ for some $\lambda\geq 0$ or $\mathbf{y}=0$. For $0<p< 1$, the inequality in (\ref{MinkovIneq}) is reversed.
	\end{proposition}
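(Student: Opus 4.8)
The plan is to derive the Minkowski inequality from Hölder's inequality, which I take as a known companion result. For $1<p<\infty$ let $q$ denote the conjugate exponent defined by $\frac{1}{p}+\frac{1}{q}=1$, so that $(p-1)q=p$ and $1-\frac{1}{q}=\frac{1}{p}$. The boundary cases are handled separately: for $p=1$ the inequality (\ref{MinkovIneq}) is just the ordinary triangle inequality for $|\cdot|$ summed over $k$, and for $p=\infty$ it is $\max_k|x_k+y_k|\le\max_k|x_k|+\max_k|y_k|$. If $\sum_k|x_k+y_k|^p=0$ the claim is trivial, so I assume $1<p<\infty$ and $\sum_k|x_k+y_k|^p>0$.

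First I would split each summand as $|x_k+y_k|^p=|x_k+y_k|\,|x_k+y_k|^{p-1}$ and apply the termwise bound $|x_k+y_k|\le|x_k|+|y_k|$, giving
\begin{align*}
\sum_{k=1}^n|x_k+y_k|^p\le \sum_{k=1}^n|x_k|\,|x_k+y_k|^{p-1}+\sum_{k=1}^n|y_k|\,|x_k+y_k|^{p-1}.
\end{align*}
Applying Hölder's inequality to each sum on the right with exponents $p$ and $q$, and using $(p-1)q=p$, produces the common factor $\bigl(\sum_k|x_k+y_k|^p\bigr)^{1/q}$, so that
\begin{align*}
\sum_{k=1}^n|x_k+y_k|^p\le\Bigl[\bigl(\sum_k|x_k|^p\bigr)^{1/p}+\bigl(\sum_k|y_k|^p\bigr)^{1/p}\Bigr]\bigl(\sum_k|x_k+y_k|^p\bigr)^{1/q}.
\end{align*}
Dividing both sides by the strictly positive quantity $\bigl(\sum_k|x_k+y_k|^p\bigr)^{1/q}$ and invoking $1-\frac{1}{q}=\frac{1}{p}$ yields (\ref{MinkovIneq}).

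For the equality condition when $1\le p<\infty$, I would trace back through the two inequalities that were used: equality in (\ref{MinkovIneq}) forces equality simultaneously in the termwise bound $|x_k+y_k|=|x_k|+|y_k|$ and in both applications of Hölder. The Hölder equality case gives proportionality of the relevant vectors, which combined with $|x_k+y_k|=|x_k|+|y_k|$ forces $\mathbf{x}$ and $\mathbf{y}$ to be positively linearly dependent, i.e. $\mathbf{x}=\lambda\mathbf{y}$ for some $\lambda\ge0$, or $\mathbf{y}=\mathbf{0}$. For $0<p<1$ I would restrict to nonnegative entries, which is exactly the regime in which the inequality is applied since the $x_k,y_k$ there are built from probabilities; then $|x_k+y_k|=x_k+y_k=|x_k|+|y_k|$ makes the termwise step an exact identity, while the two Hölder applications are replaced by the reverse Hölder inequality, valid for $0<p<1$ with the negative conjugate exponent $q=\frac{p}{p-1}$. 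Every inequality then flips and the same division argument yields the reversed form of (\ref{MinkovIneq}).

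The hard part will be the case $0<p<1$: one must invoke reverse Hölder rather than the standard inequality, track carefully that the conjugate exponent $q$ is now negative (so that $\bigl(\sum_k|x_k+y_k|^p\bigr)^{1/q}$ is still a positive number and division preserves the reversed direction), and make the nonnegativity hypothesis explicit, since without it the reversed inequality can fail. By contrast, the equality analysis is routine once the equality case of Hölder is in hand, so the bookkeeping for the reversed regime is where essentially all of the care is required.
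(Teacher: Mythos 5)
The paper offers no proof of this proposition to compare against: it is quoted as a classical fact and used as a black box in the proof of Lemma~\ref{lemma:RenyiPolar1}. Your argument is the standard H\"older-based derivation---split $|x_k+y_k|^p=|x_k+y_k|\,|x_k+y_k|^{p-1}$, bound termwise by the triangle inequality, apply H\"older with conjugate exponent $q=p/(p-1)$ so that $(p-1)q=p$, and divide by $\bigl(\sum_k|x_k+y_k|^p\bigr)^{1/q}$---and it is correct, including the boundary cases $p=1$ and $p=\infty$ and the reverse-H\"older treatment of $0<p<1$ with negative $q$. Your insistence on an explicit nonnegativity hypothesis in the reversed regime is in fact a correction to the proposition as stated: with absolute values and arbitrary real entries the reversed inequality is false (take $\mathbf{y}=-\mathbf{x}$, so the left side vanishes while the right side is positive), and nonnegativity is exactly what the probabilistic application in (\ref{Proof-7}) supplies.

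Two caveats, both minor. First, the equality condition you derive by tracing H\"older is valid for $1<p<\infty$, but the proposition's blanket claim over $1\le p\le\infty$ is false at the endpoints: for $p=1$, $\mathbf{x}=(1,0)$ and $\mathbf{y}=(0,1)$ give equality without positive linear dependence, and $p=\infty$ similarly admits non-proportional equality cases; you should restrict the equality claim to $1<p<\infty$ rather than inherit the paper's imprecision. This does not affect the paper downstream, since the equality case is invoked only with $p=1/\alpha$, $\alpha\neq 1$. Second, for $\alpha>1$ the paper's Theorem~\ref{Theorem:polarRenyi} actually relies on the equality case of the \emph{reversed} inequality ($p=1/\alpha<1$), which you dismiss as routine; it does follow from the equality case of reverse H\"older for nonnegative vectors, but since that is the case the paper genuinely uses, it deserves the same explicit tracing you gave the $1<p<\infty$ case rather than a remark.
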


	\begin{proof}[Proof of Lemma \ref{lemma:RenyiPolar1}]

		(I) First, we consider (\ref{RenyiPolar-a3}). Denote
		\begin{align}
		S_1&=\sum_{\{y_1,y_2\}\in\mathcal{Y}^2}\Big{(}\sum_{\{u_1,u_2\}\in\mathcal{X}^2}P_A(u_1,u_2,y_1,y_2)\Big{)}^{\alpha}\label{S1}\\
		S_2&=\sum_{\{y_1,y_2\}\in\mathcal{Y}^2}\sum_{u_1\in\mathcal{X}}\Big{(}\sum_{u_2\in\mathcal{X}}P_A(u_1,u_2,y_1,y_2)\Big{)}^{\alpha}\label{S2}\\
		S_3&=\sum_{\{y_1,y_2\}\in\mathcal{Y}^2}\sum_{\{u_1,u_2\}\in\mathcal{X}^2}\Big{(}P_A(u_1,u_2,y_1,y_2)\Big{)}^{\alpha}\label{S3}
		\end{align}
		Then we have
		\begin{align*}
		&~~~~H_{\alpha}(U_1|Y_1Y_2)+H_{\alpha}(U_2|Y_1Y_2U_1)\\
		&=\frac{1}{1-\alpha}\log \frac{S_2}{S_1}+\frac{1}{1-\alpha}\log \frac{S_3}{S_2}\\
		&=\frac{1}{1-\alpha}\log \frac{S_3}{S_1}\\
		&=H_{\alpha}(U_1U_2|Y_1Y_2).
		\end{align*}

		(II) Next, we consider (\ref{RenyiPolar-a2}).
		\newcounter{mytempeqncnt}
		\begin{figure*}[!t]
			\normalsize
			\begin{align}
			H_{\alpha}(X_1|Y_1)
			&=\frac{1}{1-\alpha}\log \frac{\Big{(}\sum_{y_1\in\mathcal{Y}}\sum_{x_1\in\mathcal{X}}{P_{Y_1,X_1}(y_1,x_1)^{\alpha}}\Big{)}\Big{(}\sum_{y_2\in\mathcal{Y}}\sum_{x_2\in\mathcal{X}}{P_{Y_2,X_2}(y_2,x_2)^{\alpha}}\Big{)} }{\Big{(}\sum_{y_1\in\mathcal{Y}}\big{[}\sum_{x_1\in\mathcal{X}}P_{Y_1,X_1}(y_1,x_1)\big{]}^{\alpha}\Big{)}\Big{(}\sum_{y_2\in\mathcal{Y}}\sum_{x_2\in\mathcal{X}}{P_{Y_2,X_2}(y_2,x_2)^{\alpha}}\Big{)}}=\frac{1}{1-\alpha}\log \frac{S_3}{S_4}\label{Proof:1}
			\end{align}
			\hrulefill
			\vspace*{4pt}
		\end{figure*}		
		$H_{\alpha}(X_1|Y_1)$ can be expressed as (\ref{Proof:1}) (on the top of the next page), where
		\begin{align*}
		S_4&=\sum_{\{y_1,y_2\}\in\mathcal{Y}^2}\Bigg{\{} \Big{[}P_{Y_1,X_1}(y_1,0)+P_{Y_1,X_1}(y_1,1)\Big{]}^{\alpha}\\
		&~~~~~~~~\times\Big{[}P_{Y_2,X_2}(y_2,0)^{\alpha}+P_{Y_2,X_2}(y_2,1)^{\alpha}\Big{]} \Bigg{\}}
		\end{align*}
		From (\ref{S2}) and (\ref{PU1U2}) we have
		\begin{align*}
		S_2&=\sum_{\{y_1,y_2\}\in\mathcal{Y}^2}\Bigg{\{} \Big{[}P_{Y_1,X_1}(y_1,0)P_{Y_2,X_2}(y_2,0)\\
		&~~~~~~~~+P_{Y_1,X_1}(y_1,1)P_{Y_2,X_2}(y_2,1)\Big{]}^{\alpha}\\
		&~~~~~~~~~~~~+\Big{[}P_{Y_1,X_1}(y_1,1)P_{Y_2,X_2}(y_2,0)\\
		&~~~~~~~~~~~~~~~~+P_{Y_1,X_1}(y_1,0)P_{Y_2,X_2}(y_2,1)\Big{]}^{\alpha} \Bigg{\}}.
		\end{align*}
		
		For $0<\alpha<1$, by the Minkowski inequality we have
		\begin{align}
		&\Bigg{\{}\Big{[}P_{Y_1,X_1}(y_1,0)P_{Y_2,X_2}(y_2,0)+P_{Y_1,X_1}(y_1,1)P_{Y_2,X_2}(y_2,1)\Big{]}^{\alpha}\nonumber\\
		&~~~~+\Big{[}P_{Y_1,X_1}(y_1,1)P_{Y_2,X_2}(y_2,0)\nonumber\\
		&~~~~~~~~+P_{Y_1,X_1}(y_1,0)P_{Y_2,X_2}(y_2,1)\Big{]}^{\alpha}\Bigg{\}}^{\frac{1}{\alpha}}\nonumber\\
		&\geq  \Big{\{}\big{[}P_{Y_1,X_1}(y_1,0)P_{Y_2,X_2}(y_2,0)\big{]}^{\alpha}\nonumber\\
		&~~~~+\big{[}P_{Y_1,X_1}(y_1,0)P_{Y_2,X_2}(y_2,1)\big{]}^{\alpha}\Big{\}}^{\frac{1}{\alpha}}\nonumber\\
		&~~~~~~~~+\Big{\{}\big{[}P_{Y_1,X_1}(y_1,1)P_{Y_2,X_2}(y_2,0)\big{]}^{\alpha}\nonumber\\
		&~~~~~~~~~~~~+\big{[}P_{Y_1,X_1}(y_1,1)P_{Y_2,X_2}(y_2,1)\big{]}^{\alpha}\Big{\}}^{\frac{1}{\alpha}} \label{Proof-7}\\
		&=\Big{[}P_{Y_1,X_1}(y_1,0)+P_{Y_1,X_1}(y_1,1)\Big{]}\nonumber\\
		&~~~~\times\Big{[}P_{Y_2,X_2}(y_2,0)^{\alpha}+P_{Y_2,X_2}(y_2,1)^{\alpha}\Big{]}^{\frac{1}{\alpha}}.\nonumber
		\end{align}
		Thus, $S_2\geq S_4$, which means
		\begin{align}
		H_{\alpha}(U_2|Y_1Y_2U_1)&=\frac{1}{1-\alpha}\log \frac{S_3}{S_2}\nonumber\\
		&\leq \frac{1}{1-\alpha}\log \frac{S_3}{S_4}=H_{\alpha}(X_1|Y_1).
		\end{align}
				
		For $\alpha>1$, the inequality in (\ref{Proof-7}) is reversed. However, since $1-\alpha< 0$, the result remains the same.	
		For $\alpha=1$, the Renyi entropy reduces to the Shannon entropy, and the polarization result is identical to the result here \cite{arikan2009channel}.
		For $\alpha=0$, it is obvious from (\ref{max-entropy}) and (\ref{def:CondiRenyi}) that (\ref{RenyiPolar-a3}) holds, while (\ref{RenyiPolar-a1}) and (\ref{RenyiPolar-a2}) hold with equality.
		
		Similarly,
		$H_{\alpha}(U_2|Y_1Y_2U_1)\leq H_{\alpha}(X_2|Y_2)$ can be proved.
		
		(III) Finally, equality (\ref{RenyiPolar-a3}) and inequality (\ref{RenyiPolar-a2}) immediately imply (\ref{RenyiPolar-a1}).

	\end{proof}

	\subsection{Recursive Polar Transformation}
	Now consider extending the basic transformation recursively to higher orders. For $N=2^n$ with $n$ being an arbitrary integer, the recursive transformation can be expressed as 
	$U^{1:N}=X^{1:N}\mathbf{G}_N$ \cite{arikan2009channel},
	where $\mathbf{G}_N=\mathbf{B}_N \textbf{F}^{\otimes n}$ with $\mathbf{B}_N$ being the bit-reversal matrix and $\textbf{F}=
	\begin{bmatrix}
	1 & 0 \\
	1 & 1
	\end{bmatrix}$. Denote $H_N(i)=H_{\alpha}(U^i|Y^{1:N}U^{1:i-1})$. We have the following theorem.
	
	\newtheorem{theorem}{Theorem}
	\begin{theorem}
		\label{Theorem:polarRenyi}
		For any B-DMC $P_{Y|X}$ (or any discrete memoryless source $(X,Y)\sim P_{X,Y}$ over $\mathcal{X}\times \mathcal{Y}$ with $\mathcal{X}=\{0,1\}$ and $\mathcal{Y}$ an arbitrary countable set) and any $\alpha\geq 0$, as $N\rightarrow \infty$ through the power of 2, the fraction of indices $i\in[N]\triangleq \{1,2 ...,N\}$ with $H_N(i)\in (1-\delta,1]$ goes to $H_{\alpha}(X|Y)$, and the fraction with $H_N(i)\in [0,\delta)$ goes to $1-H_{\alpha}(X|Y)$.
		
	\end{theorem}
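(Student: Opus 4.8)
The plan is to run Ar\i kan's tree--martingale argument with the order-$\alpha$ conditional R\'enyi entropy playing the role that symmetric capacity plays in \cite{arikan2009channel}. First I would put a uniform random path on the depth-$n$ polar tree: let $B_1,B_2,\dots$ be i.i.d.\ uniform branch labels, let $W_n$ be the synthetic channel reached after the operations $B_1,\dots,B_n$ applied to the original pair $(X,Y)$, and set $H_n\triangleq H_\alpha(W_n)$ (so $H_0=H_\alpha(X|Y)$). Because the $2^n$ leaves at level $n$ are in bijection with the $2^n$ equiprobable paths, the law of $H_n$ is exactly the empirical distribution of $\{H_N(i)\}_{i\in[N]}$, $N=2^n$, so the theorem is equivalent to an almost-sure statement about $H_n$. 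Each node combines two independent copies of the \emph{same} channel, so invariance of $H_\alpha$ under the bijection $(X_1,X_2)\leftrightarrow(U_1,U_2)$ together with additivity of the conditional R\'enyi entropy (\ref{def:CondiRenyi}) over independent pairs gives $H_\alpha(U_1U_2|Y_1Y_2)=2H_\alpha(W_n)=2H_n$; with the chain rule (\ref{RenyiPolar-a3}) this yields $\tfrac12 H_\alpha(W_n^-)+\tfrac12 H_\alpha(W_n^+)=H_n$, where $W_n^-,W_n^+$ denote the $U_1$- and $U_2$-children. Hence $\{H_n\}$ is a martingale, and it is bounded in $[0,1]$ since for binary input $\phi_\alpha(q)\triangleq q^\alpha+(1-q)^\alpha$ ranges over $[1,2^{1-\alpha}]$ when $\alpha<1$ and over $[2^{1-\alpha},1]$ when $\alpha>1$.

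By the martingale convergence theorem $H_n\to H_\infty$ almost surely and in $L^1$, so $\mathbb{E}[H_\infty]=\mathbb{E}[H_0]=H_\alpha(X|Y)$. The decisive feature is that the increment magnitude is branch-independent: writing $D_n\triangleq H_\alpha(W_n^-)-H_\alpha(W_n^+)\ge 0$ (nonnegativity from (\ref{RenyiPolar-a2}) and (\ref{RenyiPolar-a1}) applied to two identical copies, which give $H_\alpha(W_n^-)\ge H_n\ge H_\alpha(W_n^+)$), the conservation relation forces $|H_{n+1}-H_n|=D_n/2$ whichever child is selected. Almost-sure convergence of $H_n$ therefore forces $D_n\to 0$ almost surely. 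The whole theorem then follows once I show $H_\infty\in\{0,1\}$ a.s.: indeed $P(H_\infty=1)=\mathbb{E}[H_\infty]=H_\alpha(X|Y)$, and $H_n\to H_\infty\in\{0,1\}$ yields $P(H_n\in(\delta,1-\delta))\to 0$, $P(H_n\in(1-\delta,1])\to H_\alpha(X|Y)$ and $P(H_n\in[0,\delta))\to 1-H_\alpha(X|Y)$, which is precisely the claimed convergence of index fractions.

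The core step, and the main obstacle, is a uniform gap (rate) lemma: for every $\epsilon>0$ there exists $g(\epsilon)>0$ with $D(W)\ge g(\epsilon)$ whenever $H_\alpha(W)\in[\epsilon,1-\epsilon]$; granting this, $D_n\to 0$ a.s.\ is incompatible with $H_\infty\in(0,1)$, forcing $H_\infty\in\{0,1\}$. To prove it I would reduce any binary-input channel to a single scalar statistic: with $q_y=P_{X|Y}(0|y)$ and the tilted weight $\pi(y)\propto P_Y(y)^\alpha$, recomputing the sums $S_1,S_2,S_3$ from the proof of Lemma \ref{lemma:RenyiPolar1} shows that $H_\alpha(W)$, $H_\alpha(W^-)$ and $H_\alpha(W^+)$ depend on $W$ only through the law $\mu\in\mathcal{P}([0,1])$ of $Q=q_Y$ under $\pi$, via $H_\alpha(W)=\tfrac{1}{1-\alpha}\log\mathbb{E}_\mu[\phi_\alpha]$ and $H_\alpha(W^-)=\tfrac{1}{1-\alpha}\log\mathbb{E}_{\mu\times\mu}[\psi_\alpha]$ with $\psi_\alpha(q_1,q_2)=[q_1q_2+(1-q_1)(1-q_2)]^\alpha+[q_1(1-q_2)+(1-q_1)q_2]^\alpha$ (the common normalizer cancels). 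Since $\mathcal{P}([0,1])$ is weak-$*$ compact and $\phi_\alpha,\psi_\alpha$ are continuous and bounded for $\alpha\in(0,1)\cup(1,\infty)$, both $H_\alpha$ and the gap $D$ are continuous functionals of $\mu$, the set $\{\mu:H_\alpha(W)\in[\epsilon,1-\epsilon]\}$ is compact, and $D$ attains its minimum there. That this minimum is strictly positive I would extract from the equality case of the Minkowski inequality (Proposition \ref{Prop:Minkowski}): tracing the chain of inequalities in the proof of Lemma \ref{lemma:RenyiPolar1} shows $D(\mu)=0$ forces the positive-proportionality condition $\mu\times\mu$-almost everywhere, which pins $\mu$ to the extremal laws---support in $\{0,1\}$, giving $H_\alpha(W)=0$, or $\mu=\delta_{1/2}$, giving $H_\alpha(W)=1$---all lying outside $[\epsilon,1-\epsilon]$.

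The boundary orders are disposed of separately: $\alpha=1$ is Ar\i kan's Shannon-entropy polarization, and $\alpha=0$ is immediate from (\ref{max-entropy}) because the transform preserves max-entropy, Lemma \ref{lemma:RenyiPolar1} holding there with equality. The one genuinely delicate point is the equality-case analysis establishing that $D(\mu)=0$ occurs \emph{only} at the extremal $\mu$; the compactness and bounded-martingale ingredients are routine. I expect this characterization, rather than the martingale machinery, to be where the real work lies.
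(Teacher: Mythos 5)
Your proposal is correct for $\alpha\in(0,1)\cup(1,\infty)$ and rests on the same skeleton as the paper's proof --- Ar{\i}kan's tree martingale, with the martingale property coming from the chain rule (\ref{RenyiPolar-a3}) together with the conservation identity $H_{\alpha}(U_1U_2|Y_1Y_2)=2H_{\alpha}(X|Y)$ (which you justify, slightly more carefully than the paper does, via bijection invariance plus additivity of (\ref{def:CondiRenyi}) over independent pairs), bounded-martingale convergence, and $\mathbb{E}[H_\infty]=H_\alpha(X|Y)$ --- but it takes a genuinely different route at the decisive step. The paper argues that $\mathbb{E}[|H_{n+1}-H_n|]\to 0$ forces the Minkowski inequality (\ref{Proof-7}) to hold with equality asymptotically a.e., and then reads off $H_n\to\{0,1\}$ from the equality cases of Proposition \ref{Prop:Minkowski} (Case 1/Case 2); that passage is only heuristic as written, which is why the paper must introduce ``effective'' output symbols and devote a subsection (Section II-C) to the fact that the Case 1/Case 2 classification is $\alpha$-relative. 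You instead prove a uniform gap lemma: represent any binary-input pair by the law $\mu$ of $P_{X|Y}(0|Y)$ under the tilted measure $\pi(y)\propto P_Y(y)^\alpha$, so that $H_\alpha(W)=\frac{1}{1-\alpha}\log\mathbb{E}_\mu[\phi_\alpha]$ and $H_\alpha(W^-)=\frac{1}{1-\alpha}\log\mathbb{E}_{\mu\times\mu}[\psi_\alpha]$ (both identities check out, since the tilted measure of the product of two independent copies factorizes as $\pi\times\pi$), then use weak-$*$ compactness of $\mathcal{P}([0,1])$, continuity of these functionals (the integrands are bounded away from zero for $\alpha\neq 0,1$), and the Minkowski equality case to show $D(\mu)=0$ only when $\mu(\{0,1\})=1$ or $\mu=\delta_{1/2}$. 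This converts $D_n\to 0$ a.s. directly into $H_\infty\in\{0,1\}$ a.s.\ without ever describing the limiting distributions, and the ``effective elements'' subtlety disappears because the tilting already discounts outputs that are negligible at order $\alpha$. In short, your route costs extra topological machinery but is rigorous exactly where the paper is informal; it is the natural R\'enyi analogue of the compactness/gap strengthening of Ar{\i}kan's argument known from Shannon-entropy polarization proofs, with the key observation being the correct tilting $\pi\propto P_Y^\alpha$.

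One caveat, which you share with the paper: the dismissal of $\alpha=0$ is not sound in general. The polar transform preserves conditional max-entropy only when all probabilities $P_{X,Y}(x,y)$ are strictly positive (in which case $H_0(X|Y)=1$ identically and the claim is trivial). If some probabilities vanish without the pair being deterministic --- e.g.\ a binary erasure channel with uniform input, where $H_0(X|Y)=\log(4/3)$ --- then Lemma \ref{lemma:RenyiPolar1} holds \emph{strictly} at $\alpha=0$ (one computes $H_0(W^-)=\log(14/9)$ and $H_0(W^+)=\log(8/7)$), so ``the transform preserves max-entropy'' is false; your compactness argument also breaks there, because $\phi_0$ is discontinuous at $q\in\{0,1\}$. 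A correct treatment of $\alpha=0$ needs a separate argument (for instance, a martingale built on the fraction of output symbols compatible with both inputs). As written, both your proof and the paper's cover $\alpha=0$ only under an implicit positivity assumption.
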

	
	\begin{proof}
		We follow Ar{\i}kan's martingale approach \cite{arikan2009channel,arikan2014martingale} to complete the proof. First, we introduce the same infinite binary tree as in the proof of \cite[Theorem 1]{arikan2009channel}, with a root node at level 0 and $2^n$ nodes at level $n$. Then define a random walk $\{B_n;n\geq 0\}$ in this tree as follows. The random walk starts at the root node with $B_0=(0,1)$, and moves to one of the two child nodes in the next level with equal probability at each integer time. If $B_n=(n,i)$, $B_{n+1}$ equals $(n+1,2i-1)$ or $(n+1,2i)$ with probability $1/2$ each. Denote $H(0,1)=H_{\alpha}(X|Y)$ and $H(n,i)=H_{\alpha}(U^i|Y^{1:2^n},U^{1:i-1})$ for $n\geq 1$, $i=[2^n]$. Define a random process $\{H_n;n\geq 0\}$ with $H_n=H(B_n)$. It can be shown that the process $\{H_n;n\geq 0\}$ is a martingale due to the chain rule equality of (\ref{RenyiPolar-a3}), i.e.,
		\begin{equation}
		E[H_{n+1}|B_0,B_1,...,B_n]=H_{n}.
		\end{equation}
		
		Since $\{H_n;n\geq 0\}$ is a uniformly integrable martingale, it converges a.e. to an RV $H_{\infty}$ such that $E[|H_{n}-H_{\infty}|]=0$. Then we have 
		\begin{equation}
		E[|H_{n}-H_{n+1}|]\rightarrow 0.\label{Converge-1}
		\end{equation}
		Note that
		\begin{align}
		&~~~~E[|H_{n}-H_{n+1}|]\nonumber\\
		&=\frac{1}{2}\Big{(} E[|H_{\alpha}(U^i|Y^{1:2^n},U^{1:i-1})\nonumber\\
		&~~~~-H_{\alpha}(U^{2i-1}|Y^{1:2^{n+1}},U^{1:2i-2})|]\nonumber\\
		&~~~~~~+E[|H_{\alpha}(U^i|Y^{1:2^n},U^{1:i-1})\nonumber\\
		&~~~~~~~~-H_{\alpha}(U^{2i}|Y^{1:2^{n+1}},U^{1:2i-1})|] \Big{)}.\label{Converge-2}
		\end{align}
		In Lemma \ref{lemma:RenyiPolar1}, by letting $X_1=X_2=U^i$, $Y_1=Y_2=(Y^{1:2^n},U^{1:i-1})$, $U_1=U^{2i-1}$, and $U_2=U^{2i}$, we can see that (\ref{Converge-1}) and (\ref{Converge-2}) force (\ref{Proof-7}) to hold with equality for $i\in[2^n]$ a.e. as $n\rightarrow \infty$. From Proposition \ref{Prop:Minkowski} we know that (\ref{Proof-7}) holds with equality only in the following two cases:
		\begin{itemize}
			\item (Case 1) $P_{Y_1,X_1}(y_1,0)=0$ or $P_{Y_1,X_1}(y_1,1)=0$ for all effective $y_1\in\mathcal{Y}$.
			\item (Case 2) $P_{Y_2,X_2}(y_2,0)=P_{Y_2,X_2}(y_2,1)$ for all effective $y_2\in\mathcal{Y}$.
		\end{itemize}
	    The effective elements of $\mathcal{Y}$ (denoted by $\mathcal{Y}_{e}\subset \mathcal{Y}$) mean that
	    \begin{align*}
	    \frac{\sum_{\{x,y\}\in\mathcal{X}\times\mathcal{Y}_{e}}{P_{X,Y}(x,y)^{\alpha}} }{\sum_{y\in\mathcal{Y}_{e}}P_Y(y)^{\alpha}}\rightarrow \frac{\sum_{\{x,y\}\in\mathcal{X}\times\mathcal{Y}}{P_{X,Y}(x,y)^{\alpha}} }{\sum_{y\in\mathcal{Y}}P_Y(y)^{\alpha}}
	    \end{align*}
	    as $n\rightarrow \infty$. A more detailed discussion about it will be given in the next subsection. Since $P_{Y_1,X_1}$ and $P_{Y_2,X_2}$ are identical in the recursive transformation, the joint distributions of $P_{\mathbf{Y}^i,U^i}$ ($i\in [2^n]$) tend to either Case 1 or Case 2 a.e. as $n\rightarrow \infty$, where $\mathbf{Y}^i=(Y^{1:2^n},U^{1:i-1})$.
		It is easy to verify that $H_{\alpha}(U^i|\mathbf{Y}^i)$ equals 0 in Case 1 and 1 in Case 2. This shows that $H_{n}$ converges a.e. to 0 and 1 as $n\rightarrow \infty$. 
		
		The convergence result together with the chain rule equality of (\ref{RenyiPolar-a3}) imply that the fraction of $\{i:H(n,i)\in (1-\delta,1]\}$ goes to $H_{\alpha}(X|Y)$ as $n\rightarrow \infty$.
		
	\end{proof}

    Now we can answer the question raised at the beginning of this paper. From the definition of the max-entropy $H_{0}(X)$ and (\ref{RenyiCondChain}) we know that
    \begin{align}
    H_{0}(X|Y)
    =\log |\mathcal{X}\times\mathcal{Y}|-\log |\mathcal{Y}|=1,
    \end{align}
    provided that the probabilities $P_{\mathbf{Y}^i,U^i}(\mathbf{y}^i,u^i)$ are nonzero. Since $P_{\mathbf{Y}^i,U^i}(\mathbf{y}^i,u^i)$ never really become 0 by the polar transformation, we know that the fraction of sub-channels with $H_{0}(U^i|\mathbf{Y}^i)\rightarrow 0$ is always 0, which means that the synthetic sub-channels never really polarize to truly deterministic state. On the contrary, the min-conditional-entropy,
    \begin{align}
    H_{\infty}(X|Y)&=H_{\infty}(XY)-H_{\infty}(Y)\nonumber\\
    &=\log\frac{\max_{y\in\mathcal{Y}} p_y}{\max_{\{x,y\}\in\mathcal{X}\times\mathcal{Y}} p_{x,y}},
    \end{align}
    is not a constant in general. As will be explained in Section \ref{Sec:DiscussRenyi}, only truly uniform distribution yields $H_{\infty}(X|Y)=1$. Therefore, as $n\rightarrow \infty$, a fraction $H_{\infty}(X|Y)$ of the sub-channels will become truly completely noisy. This result can be concluded by the following corollary.
    
    \newtheorem{corollary}{Corollary}
    \begin{corollary}
    	As $n\rightarrow\infty$, the fraction of truly completely noisy sub-channels tends to $H_{\infty}(X|Y)$, while the fraction of truly noiseless sub-channels is 0.
    	
    \end{corollary}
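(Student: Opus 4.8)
The plan is to read the corollary as the specialization of Theorem~\ref{Theorem:polarRenyi} to the two extreme orders $\alpha=0$ and $\alpha=\infty$, combined with the two pointwise characterizations recorded just above it: a synthetic sub-channel is truly noiseless precisely when its conditional max-entropy $H_0(U^i|\mathbf{Y}^i)$ vanishes, and truly completely noisy precisely when its conditional min-entropy $H_\infty(U^i|\mathbf{Y}^i)$ equals $1$. Both characterizations are exact: a truly deterministic conditional law makes $\sum_{x,y}P_{X,Y}(x,y)^\alpha=\sum_y P_Y(y)^\alpha$, forcing $H_\alpha=0$ for every $\alpha$, whereas a truly uniform conditional law makes the ratio in (\ref{def:CondiRenyi}) equal to $2^{1-\alpha}$, forcing $H_\alpha=1$ for every $\alpha$. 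In particular each true extremal state is already detected by a single strictest order, namely $\alpha=0$ for determinism and $\alpha=\infty$ for uniformity, which is what lets me reduce each assertion to one order.

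For the noiseless assertion I would observe that the conclusion actually holds at every level $n$, not merely in the limit. Starting from a strictly positive source law $P_{X,Y}$, the transformation (\ref{PolarTransform}) only forms sums and products of positive quantities, so no synthetic joint probability $P_{\mathbf{Y}^i,U^i}(\mathbf{y}^i,u^i)$ is ever zero. By (\ref{max-entropy}) and the chain rule (\ref{RenyiCondChain}) this yields $H_0(U^i|\mathbf{Y}^i)=\log|\mathcal{X}\times\mathcal{Y}_e|-\log|\mathcal{Y}_e|=1$ for every index, so $H_0$ never attains $0$ and no sub-channel is ever truly noiseless. Equivalently, instantiating Theorem~\ref{Theorem:polarRenyi} at $\alpha=0$, where $H_0(X|Y)=1$, the level-$0$ fraction tends to $1-H_0(X|Y)=0$.

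For the completely-noisy assertion I would instantiate Theorem~\ref{Theorem:polarRenyi} at the min-entropy order $\alpha=\infty$, so that for every $\delta>0$ the fraction of indices with $H_\infty(U^i|\mathbf{Y}^i)\in(1-\delta,1]$ tends to $H_\infty(X|Y)$. To promote proximity to $1$ into the exact value required by true uniformity, I would use the martingale $\{H_n\}$ of the proof of Theorem~\ref{Theorem:polarRenyi} for this order: being uniformly integrable it converges a.e.\ to a $\{0,1\}$-valued limit whose mean is preserved, hence equal to $H_\infty(X|Y)$. The event that this limit equals $1$ coincides with convergence to Case~2 in the proof of Theorem~\ref{Theorem:polarRenyi}, where the conditional law is exactly uniform and therefore $H_\infty=1$; such sub-channels are truly completely noisy, so their fraction tends to $H_\infty(X|Y)$.

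The main obstacle I anticipate is legitimizing the order $\alpha=\infty$, since Lemma~\ref{lemma:RenyiPolar1} and Theorem~\ref{Theorem:polarRenyi} are established for finite $\alpha$ via the Minkowski inequality (\ref{MinkovIneq}) with exponent $\alpha$, which degenerates as $\alpha\to\infty$. I would settle this either by re-deriving the chain rule (\ref{RenyiPolar-a3}) and the equality conditions (Case~1 and Case~2 in the proof of Theorem~\ref{Theorem:polarRenyi}) directly for the min-entropy, or by passing to the limit from finite $\alpha$ through $H_\infty=\lim_{\alpha\to\infty}H_\alpha$ together with the fact that the two true extremal laws are characterized identically for all orders. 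A secondary point needing care is the alphabet $\mathcal{Y}^{1:2^n}\times\{0,1\}^{\,i-1}$ of growing size: I would control it exactly as in the proof of Theorem~\ref{Theorem:polarRenyi} via the effective-symbol set $\mathcal{Y}_e$, so that the identification of the $H_\infty=1$ event with exact uniformity is carried out on a genuinely well-defined limiting law.
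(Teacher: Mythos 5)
Your proposal follows essentially the same route as the paper: the paper also settles the noiseless claim by observing that the polar transformation never produces exactly-zero probabilities, so $H_{0}(U^i|\mathbf{Y}^i)=1$ for every index and no sub-channel is ever truly deterministic, and it settles the noisy claim by applying Theorem~\ref{Theorem:polarRenyi} at the min-entropy order together with the Section~\ref{Sec:DiscussRenyi} observation that only a truly uniform conditional law yields $H_{\infty}(X|Y)=1$. Your extra care in legitimizing $\alpha=\infty$ (which the paper invokes without extending its finite-$\alpha$ Minkowski machinery) is a refinement of, not a departure from, the paper's argument.
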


    \subsection{An Example of Effective Elements}
    The reason we introduce the term of effective elements is that, without the "effective" in the definitions of Case 1 and Case 2,  the conditional R\'{e}nyi entropies in Case 1 and Case 2 of different orders should all be 0 and 1, respectively. In this subsection, we further discuss this issue. First, let us present an example to show that for a given $\alpha=\alpha_0$ and $\alpha'=\alpha_0+1$, $H_{\alpha}(X|Y)$ can be arbitrarily close to 1 while $H_{\alpha'}(X|Y)$ can be arbitrarily close to 0.
    
    Let $|\mathcal{Y}|=2^N\triangleq M$. Consider a joint distribution $P_{X,Y}$ such that a fraction $\frac{1}{L}$ of probability pairs $\big{(} P_{X,Y}(0,y),P_{X,Y}(1,y)\big{)}$ are completely deterministic with accumulated probability of $\frac{1}{N}$. Without loss of generality, we assume that $P_{X,Y}(0,y_i)=\frac{L}{NM}$ and $P_{X,Y}(1,y_i)=0$ for $i\in \mathcal{A}$, where $\mathcal{A}\subset [2^N]$ is the set of deterministic pairs. The rest $\frac{L-1}{L}$ fraction of probability pairs are completely uniform, i.e., $P_{X,Y}(0,y_i)=P_{X,Y}(1,y_i)=\frac{(N-1)L}{2NM(L-1)}$ for $i\in \mathcal{A}^C$. Then
    \begin{align}
    &~~~~H_{\alpha}(X|Y)\nonumber\\
    &=\frac{1}{1-\alpha}\log \frac{\sum_{i\in\mathcal{A}}{(\frac{L}{NM})^{\alpha}}+2^{1-\alpha}\sum_{i\in\mathcal{A}^C}{(\frac{(N-1)L}{NM(L-1)})^{\alpha}} }{\sum_{i\in\mathcal{A}}{(\frac{L}{NM})^{\alpha}}+\sum_{i\in\mathcal{A}^C}{(\frac{(N-1)L}{NM(L-1)})^{\alpha}}}\nonumber\\
    &=\frac{1}{1-\alpha}\log \frac{\frac{M}{L}{(\frac{L}{NM})^{\alpha}}+2^{1-\alpha}\frac{M(L-1)}{L}{(\frac{(N-1)L}{NM(L-1)})^{\alpha}} }{\frac{M}{L}{(\frac{L}{NM})^{\alpha}}+\frac{M(L-1)}{L}{(\frac{(N-1)L}{NM(L-1)})^{\alpha}}}\nonumber\\
    &=\frac{1}{1-\alpha}\log \frac{(L-1)^{\alpha-1}+2^{1-\alpha}(N-1)^{\alpha}}{(L-1)^{\alpha-1}+(N-1)^{\alpha}}.  \label{example-1}
    \end{align}
    For a considered $\alpha=\alpha_0>1$, let
    \begin{equation}
    L-1=2^{-1}(N-1)^{\frac{\alpha_0-0.5/\alpha_0}{\alpha_0-1}}.\label{example-L}
    \end{equation}
    Then we have
    \begin{align*}
    H_{\alpha_0}(X|Y)&=\frac{1}{1-\alpha_0}\log \frac{2^{1-\alpha_0}\big{[}(N-1)^{\alpha_0-\frac{0.5}{\alpha_0}}+(N-1)^{\alpha_0}\big{]}}{2^{1-\alpha_0}(N-1)^{\alpha_0-\frac{0.5}{\alpha_0}}+(N-1)^{\alpha_0}}.
    \end{align*}
    It is clear that as $N\rightarrow\infty$,$\frac{(N-1)^{\alpha_0}}{(N-1)^{\alpha_0-\frac{0.5}{\alpha_0}}}=(N-1)^{0.5/\alpha_0}\rightarrow \infty$, thus $H_{\alpha_0}(X|Y)\rightarrow 1$.
    
    Now let $\alpha'=\alpha_0+1$. From (\ref{example-1}) we have
    \begin{align*}
    &~~~~H_{\alpha'}(X|Y)\\
    &=\frac{1}{-\alpha_0}\log \frac{2^{-\alpha_0}(N-1)^{\frac{\alpha_0^2-0.5}{\alpha_0-1}}+2^{-\alpha_0}(N-1)^{\alpha_0+1}}{2^{-\alpha_0}(N-1)^{\frac{\alpha_0^2-0.5}{\alpha_0-1}}+(N-1)^{\alpha_0+1}}.
    \end{align*}
    In this case, as $N\rightarrow\infty$,$\frac{(N-1)^{\alpha_0+1}}{(N-1)^{\frac{\alpha_0^2-0.5}{\alpha_0-1}}}=(N-1)^{\frac{-0.5}{\alpha_0^2-1}}\rightarrow 0$, thus $H_{\alpha'}(X|Y)\rightarrow 0$.
    
    From (\ref{example-L}) we can see that $\frac{1}{L}\rightarrow 0$ as $N\rightarrow \infty$ if $\alpha_0>1$. Also, $\frac{1}{N}\rightarrow 0$ as $N\rightarrow \infty$. Therefore, we have shown a case when $H_{\alpha}(X|Y)$ and $H_{\alpha+1}(X|Y)$ can be completely opposite asymptotically. Fig. \ref{fig:RenyiExtreme} shows the example of $\alpha_0=2$. 
    \begin{figure}[tb]
    	\centering
    	\includegraphics[width=7cm]{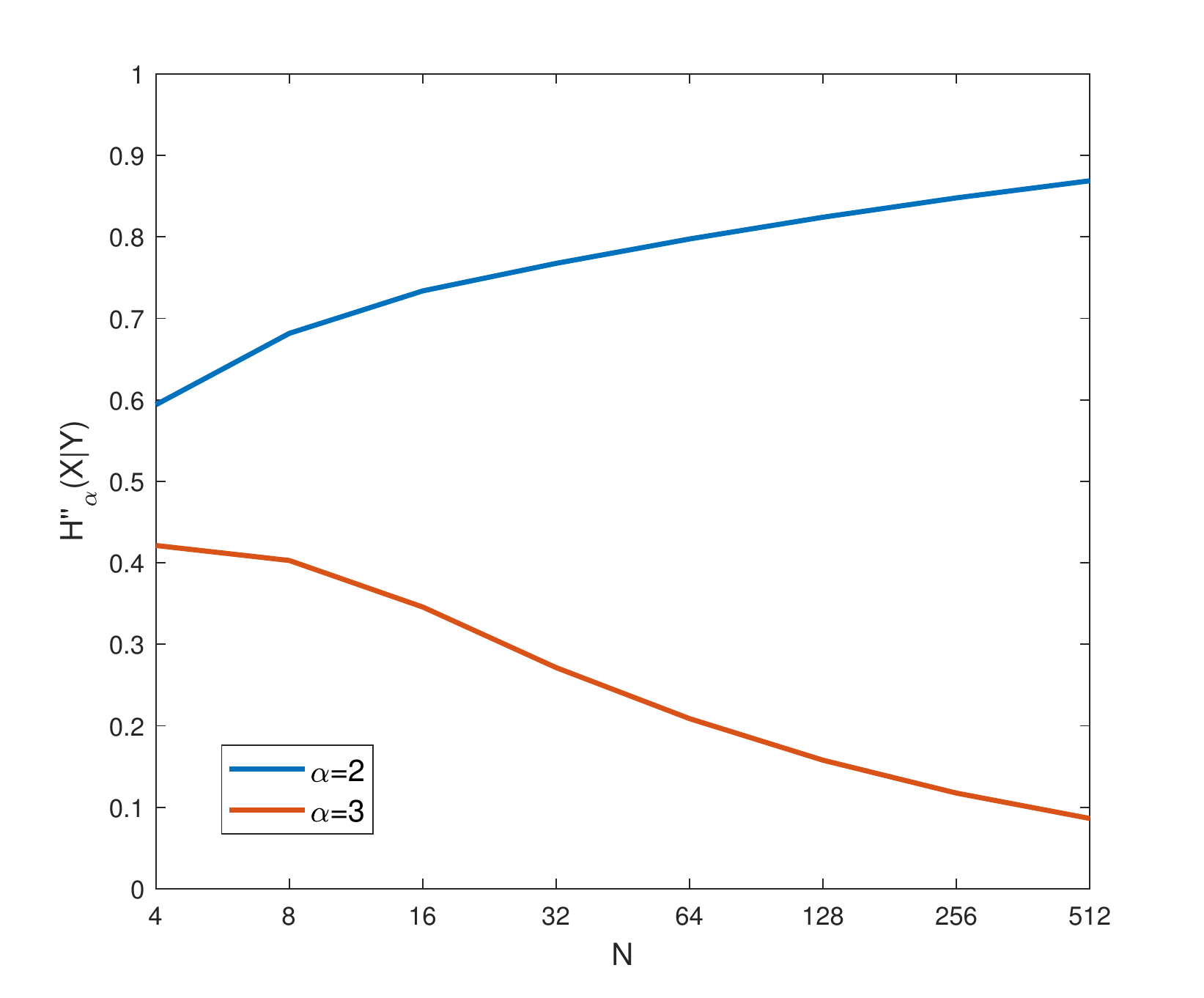}
    	\caption{conditional R\'{e}nyi entropies of different $\alpha$.} \label{fig:RenyiExtreme}
    \end{figure}

    We can similarly design such an example for $0<\alpha<1$. This shows that the extreme cases defined in the proof of Theorem \ref{Theorem:polarRenyi} are relative. Even if almost all probability pairs of a sub-channel are uniform (so that it may seem to be of Case 2), when powered by some $\alpha$, these probability pairs may have little impact on the value of $H_{\alpha}(X|Y)$ (so that the sub-channel is actually of Case 1), just as our example has shown. Thus, although we proved Theorem \ref{Theorem:polarRenyi} for any $\alpha\geq 0$ in a unified form, the criterion for judging whether a sub-channel converges to Case 1 or Case 2 depends on $\alpha$.

    \subsection{Numerical Results}
    
    \begin{figure}[tb]
    	\centering
    	\includegraphics[width=8cm]{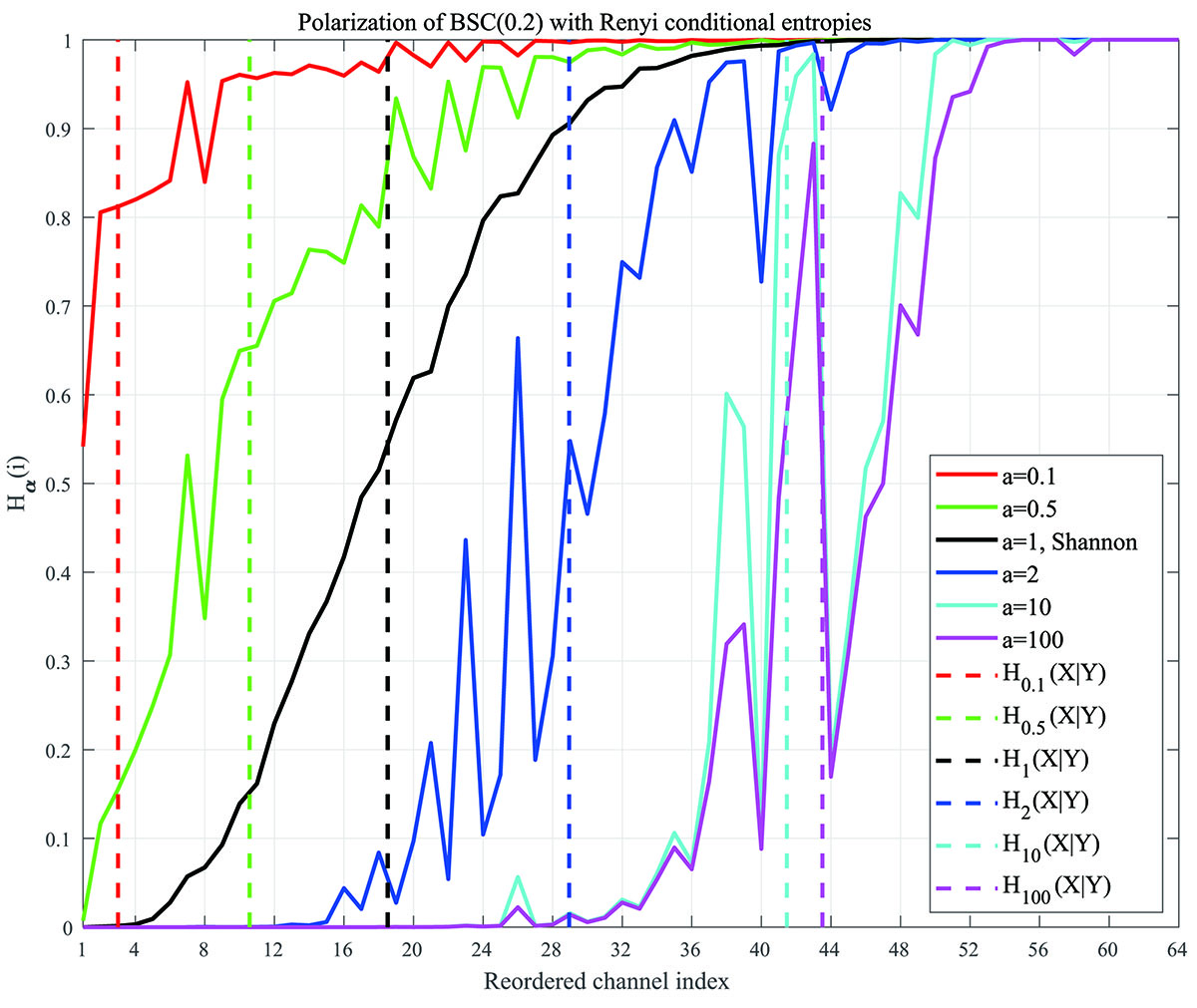}
    	\caption{Polarization of conditional R\'{e}nyi entropies.} \label{fig:PolarRenyi}
    \end{figure}
    In this subsection we present the polarization of a binary symmetric channel (BSC) with crossover probability 0.2 numerically. We calculate the joint distribution of each synthetic sub-channel and then compute its conditional R\'{e}nyi entropies of order 0.1, 0.5, 1, 2, 10 and 100, respectively. The result is shown in Fig. \ref{fig:PolarRenyi}. The channel indices are reordered so that the conditional Shannon entropies increase monotonically. The dash lines demonstrate the proportions of extremal sub-channels as $n\rightarrow\infty$. Although the considered block-length is not long, this figure clearly shows that the polarized sets vary with $\alpha$. We can also see that a relatively higher conditional Shannon entropy does not necessarily imply a relatively higher conditional R\'{e}nyi entropy of a different order.

	\section{A Discussion on R\'{e}nyi entropy}
	\label{Sec:DiscussRenyi}
	In this section, we further discuss how small the deviation from a uniform or deterministic distribution should be to make the conditional R\'{e}nyi entropy achieve 1 or 0. 
	Let $\bar{Q}_{X,Y}$ be a uniform distribution with respect to $X$, i.e., $\bar{Q}_{X,Y}(0,y)=\bar{Q}_{X,Y}(1,y)=\frac{1}{2}\bar{Q}_Y(y)$ for any $y\in \mathcal{Y}$, and further assume that $\bar{Q}_Y(y)=P_Y(y)$. Let $\tilde{Q}_{X,Y}$ be a deterministic distribution with respect to $X$, i.e., $\tilde{Q}_{X,Y}(0,y)=0, \tilde{Q}_{X,Y}(1,y)=\tilde{Q}_Y(y)$ or $\tilde{Q}_{X,Y}(1,y)=0, \tilde{Q}_{X,Y}(0,y)=\tilde{Q}_Y(y)$ for any $y\in \mathcal{Y}$, and further assume that $\tilde{Q}_Y(y)=P_Y(y)$. Then we have
	

	\begin{align*}
	H_{\alpha}(X|Y)
	&=\frac{1}{1-\alpha}\log\Bigg{\{} \frac{\sum_{\{x,y\}\in\mathcal{X}\times\mathcal{Y}}{P_{X,Y}(x,y)^{\alpha}} }{\sum_{\{x,y\}\in\mathcal{X}\times\mathcal{Y}}{\bar{Q}_{X,Y}(x,y)^{\alpha}}}\nonumber\\
	&~~~~~~~~\times \frac{2^{1-\alpha}\sum_{y\in\mathcal{Y}}\bar{Q}_Y(y)^{\alpha}}{\sum_{y\in\mathcal{Y}}P_Y(y)^{\alpha}}\Bigg{\}}\\
	&=1+\frac{1}{1-\alpha}\log \frac{\sum_{\{x,y\}\in\mathcal{X}\times\mathcal{Y}}{P_{X,Y}(x,y)^{\alpha}} }{\sum_{\{x,y\}\in\mathcal{X}\times\mathcal{Y}}{\bar{Q}_{X,Y}(x,y)^{\alpha}}},
	\end{align*}
	and	similarly
	\begin{align*}
	H_{\alpha}(X|Y)
	&=\frac{1}{1-\alpha}\log \frac{\sum_{\{x,y\}\in\mathcal{X}\times\mathcal{Y}}{P_{X,Y}(x,y)^{\alpha}} }{\sum_{\{x,y\}\in\mathcal{X}\times\mathcal{Y}}{\tilde{Q}_{X,Y}(x,y)^{\alpha}}}.
	\end{align*}
	
	For the uniform distribution case, assume that $P_{X,Y}(0,y_i)=\frac{1}{2}\bar{Q}_Y(y_i)+\delta_i, P_{X,Y}(1,y_i)=\frac{1}{2}\bar{Q}_Y(y_i)-\delta_i$, where $\delta_i\ll\bar{Q}_Y(y_i)$. Denote $|\mathcal{Y}|=M$. Define
	\begin{align}
	\Delta&=\frac{\sum_{\{x,y\}\in\mathcal{X}\times\mathcal{Y}}{P_{X,Y}(x,y)^{\alpha}} }{\sum_{\{x,y\}\in\mathcal{X}\times\mathcal{Y}}{\bar{Q}_{X,Y}(x,y)^{\alpha}}}-1\\
	&\approx\frac{2\alpha(\alpha-1)\sum_{i=1}^{M}\delta_i^{2}\bar{Q}^{\alpha-2}_Y(y_i) }{\sum_{i=1}^M{\bar{Q}_{Y}(y_i)^{\alpha}}}.
	\end{align}
	As a simple example, further assume that $\bar{Q}_{Y}$ is also uniform. Then to ensure that $H_{\alpha}(X|Y)\rightarrow 1$,
	\begin{align*}
	\alpha(\alpha-1)M^2\delta_i^2\rightarrow 0
	\end{align*} is required. As $\alpha\rightarrow \infty$, only truly uniform distribution yields $H_{\alpha}(X|Y)=1$.

	For the deterministic distribution case, without loss of generality, assume $\tilde{Q}_{X,Y}(0,y_i)=0, \tilde{Q}_{X,Y}(1,y_i)=\tilde{Q}_Y(y_i)$, and $P_{X,Y}(0,y_i)=\delta_i, P_{X,Y}(1,y_i)=\tilde{Q}_Y(y_i)-\delta_i$ for any $y_i\in \mathcal{Y}$, where $\delta_i\ll\tilde{Q}_Y(y_i)$. Denote $|\mathcal{Y}|=M$. 
	Define
	\begin{align}
	\Delta&=\frac{\sum_{\{x,y\}\in\mathcal{X}\times\mathcal{Y}}{P_{X,Y}(x,y)^{\alpha}} }{\sum_{\{x,y\}\in\mathcal{X}\times\mathcal{Y}}{\tilde{Q}_{X,Y}(x,y)^{\alpha}}}-1\\
	&\approx\frac{\sum_{i=1}^{M}\Big{[}\delta_i^{\alpha}-\alpha\delta_i\tilde{Q}^{\alpha-1}_Y(y_i) \Big{]}}{\sum_{i=1}^M{\tilde{Q}_{Y}(y_i)^{\alpha}}}.
	\end{align}
	For $0<\alpha<1$, the difference between $\delta_i$ and $\tilde{Q}_{Y}(y_i)$ shrinks by the power-$\alpha$ operation. As $\alpha$ approaches 0, $\sum_{i=1}^{M}\delta_i^{\alpha}$ gains more influence on $\Delta$. When $\alpha$ is small enough, we will have
	\begin{align}
	\Delta\approx \frac{\sum_{i=1}^{M}\delta_i^{\alpha}}{\sum_{i=1}^M{\tilde{Q}_{Y}(y_i)^{\alpha}}}.
	\end{align} 
	In the extreme case when $\alpha=0$ and $\delta_i>0$ for all $i\in[M]$, we get $\Delta=1$, and $H_{\alpha}(X|Y)$ always equals 1. $\Delta$ equals 0 if and only if $\delta_i=0$ for all $i\in[M]$.

	\section{Concluding Remarks}
	This work has revealed the polarization phenomenon of conditional R\'{e}nyi entropies. To apply the results to specific problems, much work has yet to be done. For example, estimating conditional R\'{e}nyi entropies accurately can be a hard problem when $N$ grows large, and existing approximation methods for polar code constructions may not be directly applied. For another example, the polarization rate of conditional R\'{e}nyi entropies has not been touched in this paper. We will leave them for future research.

	\bibliographystyle{IEEEtran}
	\bibliography{Polar_Renyi}
	
\end{document}